\providecommand\@LN[1]{}  
\newcommand{\indep}{\rotatebox[origin=c]{90}{$\models$}}
\begin{document}

\jname{arXiv Preprint}


\markboth{P. Heng et~al.}{Biometrika}

\title{Revisiting Madigan and Mosurski: Collapsibility via Minimal Separators}

\author{Pei Heng}
\affil{KLAS and School of Mathematics and Statistics, Northeast Normal University,\\ Changchun, China
	\email{peiheng@nenu.edu.cn}}

\author{Shiyuan He}
\affil{The Academy for Advanced Interdisciplinary Studies, Beijing Key Laboratory of Applied Statistics and Digital Regulation, School of Mathematics and
	Statistics, Beijing Technology and Business University,\\ Beijing, China
	\email{heshiyuan@btbu.edu.cn}}

\author{Yi Sun}
\affil{Institute of Statistics and Data Science, Xinjiang University of Finance and Economics,\\ Urumqi, China \email{brian@xjufe.edu.cn}}

\author{and Jianhua Guo}
\affil{The Academy for Advanced Interdisciplinary Studies, Beijing Key Laboratory of Applied Statistics and Digital Regulation, School of Mathematics and
	Statistics, Beijing Technology and Business University,\\ Beijing, China
	\email{jhguo@btbu.edu.cn}}

\maketitle

\begin{abstract}
	Collapsibility provides a principled approach for dimension reduction in contingency tables and graphical models. \citet{madigan1990} pioneered the study of minimal collapsible sets in decomposable models, but existing algorithms for general graphs remain computationally demanding. We show that a model is collapsible onto a target set precisely when that set contains  {at least one minimal separator} between its non-adjacent vertices. This insight motivates the Close Minimal Separator Absorption (CMSA) algorithm, which constructs minimal collapsible sets using only local separator searches at very low costs. Simulations confirm substantial efficiency gains, making collapsibility analysis practical in high-dimensional settings.

\end{abstract}

\begin{keywords}
	Collapsibility; Minimal collapsible set;  Minimal separator; Graphical model.
\end{keywords}
\vspace{-27pt}

\section{Introduction}

Collapsibility is a classical idea in multivariate statistical analysis. First introduced by \cite{yule1903notes} and \cite{simpson1951interpretation}, it provides a principled way to simplify statistical analysis by removing variables without distorting marginal associations. Within the framework of log-linear models, \cite{asmussen1983}, in a seminal {\it Biometrika} article, gave the first rigorous account, showing that collapsibility guarantees that marginal inferences coincide with those drawn from the full model. This property makes collapsibility an attractive tool for dimension reduction in contingency tables, graphical models, and related methods.

Also in {\it Biometrika}, a sharper problem was posed by \cite{madigan1990}:  for a target set of variables, what is the smallest superset onto which the model can be collapsed without loss of inferential validity? Such a superset is called a minimal collapsible set.
\cite{madigan1990} proposed the selective acyclic hypergraph reduction (SAHR) algorithm for searching the minimal collapsible set. Their algorithm is effective but only works for decomposable graphical models. Later work sought to extend this idea to more general graphical models: \cite{wgh2011} developed a convex-hull based procedure, while \cite{heng2023}  introduced path-absorption methods. These approaches, however, often require global graph operations and become computationally expensive for high-dimensional models.

In this note, we revisit minimal collapsibility from a new perspective. We show that a graphical model is collapsible onto a target set if and only if that set contains  {at least one minimal separator} between non-adjacent vertices in the set. This characterization, simple yet powerful, shifts focus from global graph operations to iterative separation structure searching, which is localized with several connected components with decreasing size. Based on this observation, we introduce the Close Minimal Separator Absorption (CMSA) algorithm, which iteratively absorbs close minimal separators within the neighbours of the target vertices at very low cost.


The benefits of CMSA are twofold. Conceptually, it reveals that collapsibility is governed by purely local graph properties. Computationally, it yields dramatic improvements in efficiency: in decomposable models, CMSA consistently outpaces SAHR, while in general graphical models, it scales far better than the existing algorithms.
A simple worked example illustrates the mechanism, and simulations on large random graphs confirm its speed and robustness. Our contribution is therefore both theoretical and practical: we provide a clean separator-based characterization of collapsibility, and a localized algorithm that makes collapsibility analysis feasible in high-dimensional graphical models.

\vspace{-12pt}
\section{Hierarchical log-linear model and Graph Representation}\label{sec2}

\cite{asmussen1983} analysed hierarchical log-linear models for multidimensional contingency table $N$, where each cell  $i$ is determined by a set $V$ of classifying factors.  Specifically, each factor in  $V$ is a categorical variable for classifying observations.  
Let $p(i)$ be the probability of an observation falling into a cell $i$. 
A hierarchical log-linear model $\mathcal{L}$ consists of a class of probabilities ($p\in \mathcal{L}$)  characterized by its generating class. The generating class has generators represented with square brackets. For example,  for a 4-way table with $V=\{\alpha, \beta, \gamma, \theta\}$, the log-linear model with generator $[\alpha\beta][\beta\gamma\theta]$ is, 
\begin{equation} \label{eqn:firstexample}
	\log (m_{hjkl}) = \lambda+\lambda_h^\alpha+\lambda_j^\beta+\lambda_k^\gamma+\lambda_l^\theta+\lambda_{hj}^{\alpha\beta}+\lambda_{jk}^{\beta\gamma}+\lambda_{k l}^{\gamma\theta}+\lambda_{j l}^{\beta\theta}+\lambda_{j k l}^{\beta\gamma\theta}.
\end{equation}
In the above, $m_{ijkl}$ represents the expected number of observations in the cell, whose classifying factors $(\alpha, \beta, \gamma, \theta)$ equal $(h,j,k,l)$. Besides, $\lambda_h^\alpha$ denotes the marginal effect of the variable $\alpha$,  $\lambda_{h j}^{\alpha\beta}$ represents the first-order interaction effect between $\alpha$ and $\beta$, and so on.

Log-linear models can be analysed from the graphical perspective \citep{asmussen1983}.  
An undirected graph $G=(V,E)$ consists of a vertex set $V$ and a edge set  $E\subseteq 
V\times V=\{(x,y):x,y\in V\}$. Two vertices are connected by an edge if they are in some generator.  For the  model $\mathcal{L}$ in Eqn.~\eqref{eqn:firstexample} and the model $\mathcal{L}^\prime = [\alpha\beta][\beta\gamma][\gamma\theta][\beta\theta]$, they share the same interaction graph  in Fig.~\ref{Fig1} (a) with the vertex set  $V = \{\alpha, \beta, \theta, \gamma\}$. \emph{Graphical models} are statistical models that satisfy the conditional independence relations implied by the Markov property with respect to a given graph. Note that $\mathcal{L}$ is a graphical model, whereas $\mathcal{L}^\prime$ is not.  


\begin{figure}[h]
	\centering	
	\begin{subfigure}[b]{0.3\textwidth}
		\centering
		\begin{tikzpicture}[
			scale=0.45, 
			transform shape,
			every node/.style={circle, draw=black, minimum size=7mm, inner sep=0pt, font=\large}
			]
			\node (a) at (-1.5,-1.2) {$\alpha$};
			\node (b) at (0,-1.2)    {$\beta$};
			\node (c) at (2,-1.2)    {$\gamma$};
			\node (d) at (1.1,-2.4)  {$\theta$};
			\draw[thick] (b)--(a);
			\draw[thick] (b)--(c);
			\draw[thick] (d)--(c);
			\draw[thick] (b)--(d);
		\end{tikzpicture}
		\caption{} 
	\end{subfigure}
	\hfill 
	\begin{subfigure}[b]{0.3\textwidth}
		\centering
		\begin{tikzpicture}[
			scale=0.45,
			transform shape,
			every node/.style={circle, draw=black, minimum size=7mm, inner sep=0pt, font=\large}
			]
			\node (1) at (0,0){$a$};
			\node (2) at (0,-1){$t$};
			\node (3) at (1.8,-1){$l$};
			\node [fill=gray!30] (4) at (0.8,-2){$e$};
			\node (5) at (0,-3){$x$};
			\node (6) at (2.4,-3){$d$};
			\node (7) at (3,-1.3){$b$};
			\node [fill=gray!30] (8) at (2.4,0){$s$};
			\draw[-,thick] (2)--(1); \draw[-,thick] (2)--(3); \draw[-,thick] (4)--(3);
			\draw[-,thick] (2)--(4); \draw[-,thick] (4)--(5); \draw[-,thick] (4)--(6);
			\draw[-,thick] (4)--(7); \draw[-,thick] (8)--(3); \draw[-,thick] (8)--(7);
			\draw[-,thick] (6)--(7);
		\end{tikzpicture}
		\caption{} 
	\end{subfigure}
	\hfill
	\begin{subfigure}[b]{0.3\textwidth}
		\centering
		\begin{tikzpicture}[
			scale=0.45,
			transform shape,
			every node/.style={circle, draw=black, minimum size=7mm, inner sep=0pt, font=\large}
			]
			\node (3) at (1.8,-1){$l$};
			\node [fill=gray!30] (4) at (0.8,-2){$e$};
			\node (7) at (3,-2){$b$};
			\node [fill=gray!30] (8) at (2.4,0){$s$};
			\draw[-,thick] (4)--(3); \draw[-,thick] (4)--(7);
			\draw[-,thick] (8)--(3); \draw[-,thick] (8)--(7);
		\end{tikzpicture}
		\caption{} 
	\end{subfigure}
	\hfill 
	\caption{(a) The interaction graph of $[\alpha\beta][\beta\gamma\theta]$ and $[\alpha\beta][\beta\gamma][\gamma\theta][\beta\theta]$. (b) The moral graph of the Asia network \citep{lauritzen1988local}. (c) The minimal collapsible subgraph $G_B$ containing $\{e,s\}$. }.
	\label{Fig1}
	\vspace{-0.8cm}
\end{figure}

\begin{remark}
	As in \cite{asmussen1983, madigan1990}, we let $\mathcal{L}$ be a hierarchical log-linear model. However, it is worth noting that the proposed algorithm is equally applicable to general multinomial and Gaussian graphical models. For mixed graphical models with discrete and continuous variables, our algorithm also identifies minimal collapsible sets by applying it to the constructed star graph \citep[see][for details]{Frydenberg1990,wgh2011}.
\end{remark}

We lastly review some graph terminologies. In $G$, two vertices $x$ and $y$ are adjacent if connected by an edge; then, $x$ is a neighbour of $y$, and the set of all neighbours of $x$ is denoted as $N_G(x)$. For a subset $A \subseteq V$, its neighbour set is $N_G(A) = \bigcup_{v \in A} N_G(v) \setminus A$.  A path connecting $u$ and $v$ in $G$, denoted by $l_{uv}$, is a sequence of distinct vertices and edges $(u=v_0,e_0,v_1,e_1,\ldots,v_{k-1},e_{k-1},v_k=v)$ such that $e_i=(v_i,v_{i+1})\in E$. We set $V(l_{uv})=\{v_0, v_1, \ldots, v_k\}$ to be the set of vertices on $l_{uv}$. The subgraph induced by $A$ is $G_A = (A, E_A)$ with $E_A = E \cap \{(x,y) : x,y \in A\}$. We refer to a maximal connected subgraph of $G$ as a connected component. A chordal graph is an undirected graph in which every cycle of length four or more contains a chord.

\vspace{-10pt}

\section{ Minimal collapsible set identifying algorithms}\label{sec3}

\subsection{Minimal collapsible set}

Recall, for a contingency table $N$,  $p(i)$ is the probability for an observation falling into a cell $i$, and denote $\hat{p}(i)$ as its maximum likelihood estimate (MLE), based on samples following a multinomial distribution. For \(A \subseteq V\), let $\hat{p}(i_A)$ be the probability obtained by marginalizing (summing)  $\hat{p}(i)$ over the remaining factors $V\setminus A$. This marginalized probability may change the independence structure as specified by $\mathcal{L}$ between variables in $A$. Alternatively, to keep the independence structure, we can set $\mathcal{L}_A$ as the marginal model obtained by removing all factors in $V\setminus A$ and then deleting all redundant generators (those are contained within another remaining generator). For example, for the model $\mathcal{L} = [\alpha\beta][\beta\gamma\theta]$ in Equation~\eqref{eqn:firstexample}, assuming $A = \{\beta, \theta\}$, the marginal model is specified by $\mathcal{L}_A = [\beta][\beta\theta]= [\beta\theta]$. Let $p_{\mathcal{L}_A}$ be a probability specified by $\mathcal{L}_A$ , with MLE \(\hat p_{\mathcal{L}_A}\). 
\cite{asmussen1983} formalized the notion of \emph{collapsibility}.
In particular, a hierarchical log-linear model \(\mathcal{L}\) is  said to be \emph{collapsible onto} \(A \subseteq V\) if the two MLEs coincide:   $\hat{p}(i_A) = \hat{p}_{\mathcal{L}_A}(i_A)$.
Collapsibility ensures that marginal inferences are consistent with those drawn from the full model. In practice, such property can yield substantial savings in data collection and computation efforts, while enhancing robustness to unobserved variables.

Given a decomposable graphical model $\mathcal{L}$ and a vertex subset $A \subseteq V$,  \cite{madigan1990} introduced the problem of identifying a minimal collapsible set: what is the smallest set $B$ with $A \subseteq B \subseteq V$ such that $\mathcal{L}$ is collapsible onto $B$? They proposed a simple and efficient procedure for finding such a set by iteratively removing simplicial vertices not in $A$ until no further removal is possible. The existence and uniqueness of the minimal collapsible set for general graphical models were later established in Theorem~2.4 of \cite{wgh2011}.


\subsection{Collapsibility characterization via separability}

The approach of \cite{madigan1990} only works for decomposable graphical models. For general graphical models, we present a novel characterization of collapsibility based on pairwise separability on a graph. For pairwise disjoint subsets $A, B, S \subseteq V$ in $G$, if $S \cap V(l_{ab}) \neq \emptyset$ for every path $l_{ab}$ connecting any $a \in A$ and $b \in B$, then $S$ separates $A$ from $B$ in $G$, denoted by $A \indep B \mid S[G]$. The set $S$ is called an $AB$-separator in $G$, and it is minimal if no proper subset of $S$ separates $A$ from $B$ in $G$. When $A = \{x\}$ and $B = \{y\}$, $S$ is also called a minimal $xy$-separator.
The proposed method is motivated by the following key lemma.

\begin{lemma}[\citet{asmussen1983}]\label{prop-3}
	The graphical model $\mathcal{L}$ is collapsible onto subset $A\subseteq V$ if and only if $X\indep Y\mid Z[G]$ implies $X\indep Y\mid Z\cap A [G]$ for any $X,Y\subseteq A$.
\end{lemma}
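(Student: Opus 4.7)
My approach is to reduce the lemma to the structural characterization of collapsibility established by \citet{asmussen1983} and \citet{Frydenberg1990}: the graphical model $L$ is collapsible onto $A$ if and only if, for every connected component $C$ of the induced subgraph $G_{V \setminus A}$, the boundary $N_G(C) \subseteq A$ is complete in $G$. Once this is taken as a known result, both directions of the lemma become purely graph-theoretic statements about paths.

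For the forward direction I would argue by contrapositive. Assume each component boundary is complete, and suppose some path $l_{xy}$ in $G$ joins an $x \in X$ to a $y \in Y$ (with $X,Y \subseteq A$) while avoiding $Z \cap A$. I decompose $l_{xy}$ into alternating pieces lying in $A$ and excursions through connected components of $G_{V \setminus A}$. Each excursion through a component $C$ enters at some $u \in A \cap N_G(C)$ and exits at some $v \in A \cap N_G(C)$; completeness of $N_G(C)$ supplies the edge $(u,v) \in E$, so the excursion can be shortcut. Iterating yields a path $l'_{xy} \subseteq A$ whose vertices form a subset of $V(l_{xy}) \cap A$; hence $l'_{xy}$ avoids $Z \cap A$, and since $l'_{xy} \subseteq A$ this is equivalent to avoiding $Z$. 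Therefore $Z$ cannot separate $X$ from $Y$ either, giving the contrapositive.

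For the reverse direction I would argue by contradiction. Suppose some component $C$ of $G_{V \setminus A}$ has non-adjacent boundary vertices $x, y \in A \cap N_G(C)$. Choose $X = \{x\}$, $Y = \{y\}$, and $Z = V \setminus \{x, y\}$. Since $x, y$ are non-adjacent, every $x$--$y$ path has at least one interior vertex, which lies in $Z$, so $X \indep Y \mid Z [G]$. However, $x, y \in N_G(C)$ and connectedness of $C$ provide vertices $c_x, c_y \in C$ with $(x, c_x),(y, c_y) \in E$ together with a $c_x$--$c_y$ path inside $C$; the resulting $x$--$y$ path uses only vertices in $\{x, y\} \cup C$ and is therefore disjoint from $A \setminus \{x, y\} = Z \cap A$. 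This contradicts the separation hypothesis, so every component boundary must be complete and collapsibility follows.

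\textbf{Main obstacle.} The real technical weight is carried by the structural characterization of collapsibility in terms of complete component boundaries, which is the substantive contribution of \citet{asmussen1983} and involves analysing the factorization of the joint distribution and the sufficient statistics of the MLE; I invoke it as a black box. Within the graph-theoretic reduction itself, the only delicate step is the shortcut construction when $l_{xy}$ visits several components in an arbitrary order (possibly revisiting $A$ in between), which is handled cleanly by induction on the number of excursions outside $A$.
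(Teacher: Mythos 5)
Your proof is correct, but note that the paper offers no proof of this lemma at all: it is imported verbatim from \citet{asmussen1983}, so there is no in-paper argument to match. What you have actually done is derive the separation-preservation criterion from the complete-boundary criterion (every connected component $C$ of $G_{V\setminus A}$ has $N_G(C)$ complete), taking the latter as the black box. That is a legitimate and non-circular reduction --- in \citet{asmussen1983} the boundary characterization is established directly from the factorization and the likelihood equations, not via separation --- and it is also the form of the theorem this paper itself invokes (in step (ii) of the CMSA description and in the proofs of Theorems 1 and 2). Both directions of your graph-theoretic equivalence check out: in the forward direction, each excursion of $l_{xy}$ outside $A$ has its interior in a single connected component $C$ (consecutive path vertices are adjacent), so its entry and exit points lie in $N_G(C)\cap A$ and can be shortcut by completeness, and the resulting path, having all vertices in $A$ and avoiding $Z\cap A$, automatically avoids $Z$; in the reverse direction your choice $Z=V\setminus\{x,y\}$ cleanly forces the contradiction, and it is essentially the same component-plus-nonadjacent-boundary-pair construction the paper uses to prove sufficiency in its Theorem 1. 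The one thing worth making explicit if you wrote this up in full is that the shortcut path remains simple (its vertices are a subsequence of $V(l_{xy})\cap A$, hence distinct); otherwise there is no gap.
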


Lemma \ref{prop-3} implies that for a collapsible subset $A$, any minimal $xy$-separator $Z$ for non-adjacent vertices $x, y \in A$ is contained within $A$. Otherwise, $ Z\cap A$ would form a smaller separator, contradicting the minimality of $Z$. This observation motivates the following theorem, which formally characterizes the relationship between collapsibility and minimal separators.

\begin{theorem}\label{thm-0}
	{The graphical model $\mathcal{L}$ is collapsible onto a subset $A \subseteq V$ if and only if $A$ contains at least one minimal $xy$-separator for every pair of non-adjacent vertices $x, y \in A$.}
\end{theorem}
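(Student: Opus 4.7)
The plan is to use Lemma~\ref{prop-3} as the workhorse on both directions, translating the collapsibility condition into a statement about separation, and then reading it back into the language of minimal separators.

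For the forward direction, I would start from any pair of non-adjacent vertices $x, y \in A$ and any minimal $xy$-separator $Z$. Since $Z$ separates $\{x\}$ from $\{y\}$ in $G$, Lemma~\ref{prop-3} applied with $X=\{x\}$ and $Y=\{y\}$ yields $\{x\}\indep\{y\}\mid Z\cap A\,[G]$. If $Z$ were not entirely contained in $A$, then $Z\cap A$ would be a proper subset of $Z$ that still separates $x$ from $y$, contradicting the minimality of $Z$. Hence $Z\subseteq A$, as required. This direction is the easy half and essentially repeats the motivating observation already mentioned just before the theorem.

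For the converse, assume that every minimal $xy$-separator for non-adjacent $x,y\in A$ lies in $A$. By Lemma~\ref{prop-3} it is enough to show, for any pairwise disjoint $X,Y\subseteq A$ and any $Z$ disjoint from $X\cup Y$, that $X\indep Y\mid Z\,[G]$ implies $X\indep Y\mid Z\cap A\,[G]$. First I note that each pair $x'\in X, y'\in Y$ must be non-adjacent in $G$, since otherwise the edge $x'y'$ would be a path from $X$ to $Y$ avoiding $Z$. Next, for any such pair, $Z$ separates $x'$ from $y'$, so $Z$ contains at least one minimal $x'y'$-separator $Z'_{x'y'}$ (obtained by shrinking $Z$ while maintaining separation). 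By hypothesis $Z'_{x'y'}\subseteq A$, hence $Z'_{x'y'}\subseteq Z\cap A$.

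To finish, I would take an arbitrary path $l$ from some $x'\in X$ to some $y'\in Y$ in $G$. Because $Z'_{x'y'}$ separates $x'$ from $y'$, the path $l$ must meet $Z'_{x'y'}$, and therefore meets $Z\cap A$. Since $l$ was arbitrary, $Z\cap A$ separates $X$ from $Y$, giving $X\indep Y\mid Z\cap A\,[G]$ and hence collapsibility via Lemma~\ref{prop-3}. The only delicate point I foresee is justifying the existence of a minimal $x'y'$-separator inside $Z$ (a standard shrinking argument) and keeping disjointness bookkeeping clean so that Lemma~\ref{prop-3} applies; the rest is a direct translation between separation and minimal separators.
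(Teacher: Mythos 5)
Your proof is correct, and while the necessity half coincides with the paper's (take $Z$ to be a minimal separator in Lemma~\ref{prop-3} and invoke minimality), your sufficiency argument takes a genuinely different route. The paper proves sufficiency by contradiction via the Asmussen--Edwards graphical criterion for collapsibility: if $L$ is not collapsible onto $A$, some connected component $M$ of $G_{V\setminus A}$ has two non-adjacent neighbours $x,y\in N_G(M)\subseteq A$, and since every $xy$-separator must then contain a vertex of $M\subseteq V\setminus A$, no minimal $xy$-separator can lie in $A$. You instead verify the condition of Lemma~\ref{prop-3} directly: given $X\indep Y\mid Z\,[G]$, you shrink $Z$ to a minimal $x'y'$-separator $Z'_{x'y'}\subseteq Z$ for each pair $x'\in X$, $y'\in Y$ (such a pair is necessarily non-adjacent), place $Z'_{x'y'}$ inside $Z\cap A$ by hypothesis, and conclude that every $X$--$Y$ path meets $Z\cap A$. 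The two delicate points you flag are both fine: the family of subsets of $Z$ separating $x'$ from $y'$ is finite and nonempty, so an inclusion-minimal element exists and is a minimal $x'y'$-separator in the paper's sense (every proper subset of such an element is again a subset of $Z$), and disjointness of $Z\cap A$ from $X\cup Y$ is inherited from that of $Z$. The trade-off is that the paper's proof is a one-liner but silently relies on the component-boundary characterization of collapsibility (Theorem~2.3 of \citet{asmussen1983}, cited only later in the algorithm description), whereas yours is slightly longer but self-contained given Lemma~\ref{prop-3} alone; your argument also delivers Corollary~\ref{cor-0} essentially for free, since it only ever uses one minimal separator per pair.
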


\begin{proof}
	{The necessity follows by taking $Z$ as a minimal separator in Lemma~\ref{prop-3}. For sufficiency, assume that $\mathcal{L}$ is not collapsible onto $A$. Then, by Theorem 2.3 of \cite{asmussen1983}, there exists a connected component $M$ of $G_{V \backslash A}$ whose neighbour set $N_G(M) \subseteq A$ contains a non-adjacent pair $x, y$. This implies that any subset of $A \backslash\{x, y\}$ fails to separate $x$ and $y$, since there exists a path connecting $x$ and $y$ entirely through $M$. This contradicts the condition that $A$ contains at least one minimal $x y$-separator for every non-adjacent pair in $A$, completing the proof.}
\end{proof}
{
	According to Theorem~\ref{thm-0}, a minimal collapsible set containing \(A\) can be constructed by iteratively absorbing minimal separators associated with non-adjacent vertex pairs in \(A\) until each non-adjacent pair in the updated set contains at least one minimal separator. In the next subsection, we show that this procedure can be carried out efficiently through localized searches.
}

\subsection{Close minimal separator absorption  algorithm}


{Building on Theorem~\ref{thm-0}, the iterative procedure can be implemented by absorbing a single minimal separator for each pair of non-adjacent vertices in the current set.} More importantly, the minimal separator required for each pair is what is known as a \emph{close minimal separator}, defined as one that lies entirely within the neighbourhood of a vertex. The importance of this notion is that it reduces the problem to localized searches.

\begin{definition}[{\cite{Takata2010}}]
	For any two non-adjacent vertices $x, y \in V$, a minimal $xy$-separator $S$ is said to be close to $x$ if $S \subseteq N_G(x)$. We denote such a separator as $S^{x}_{xy}$. Similarly, a minimal $xy$-separator that is close to $y$ is denoted as $S^{y}_{xy}$.
\end{definition}


For example, in Fig.~\ref{Fig1} (b), $\{e,l\}$ is the minimal $bt$-separator close to $t$, whereas $\{e,s\}$ is the minimal $bt$-separator close to $b$. \cite{Takata2010} also proposed the  \textbf{CloseSeparator} algorithm, which identifies close minimal separators with a time complexity of $O(m)$, where $m$ denotes the number of edges in the graph $G$. The algorithm first finds the neighbour set $N_G(x)$ of $x$, and then identifies the connected component $M$ containing $y$ in the subgraph $G_{V \backslash N_G(x)}$. Since $N_G(M)\subseteq N_G(x)$, the neighbour set $N_G(M)$ forms a minimal $xy$-separator that is close to $x$.

We now formally introduce the \emph{Close Minimal Separator Absorption (CMSA)} algorithm (Algorithm~\ref{alg-1}) for efficiently identifying minimal collapsible sets. Given a hierarchical log-linear model $\mathcal{L}$ with interaction graph $G=(V, E)$ and a target set $A$, CMSA proceeds as follows:

\begin{quote}
	\begin{enumerate}
		\item[(i)] Identify all connected components $M_1, \dots, M_K$ of $G_{V\setminus A}$.
		\item[(ii)] For each connected component $M_i$, initialize $B_i := A$ and consider the subgraph $G_i = G_{B_i \cup M_i}$. Iteratively identify non-adjacent vertex pairs within the neighbourhoods of the connected components of $G_{M_i}$, and absorb their close minimal separators in $G_i$ into $B_i$. The iteration stops when, for all connected components in the updated $G_{M_i}$, their neighbours in $G_i$ form complete subsets. At this point, according to Theorem~2.3 in \cite{asmussen1983}, $G_i$ can be collapsed onto $B_i$, and therefore the iteration is terminated.
		\item[(iii)] Merge the sets $B_i$ obtained from all connected components to form the final minimal collapsible set $B = \bigcup_i B_i$, which contains $A$.
	\end{enumerate}    
\end{quote}

\vspace{-11pt}
\begin{algorithm}[h]
	\caption{Close Minimal Separator Absorption Algorithm (CMSA)}
	\label{alg-1}
	\begin{algorithmic}[1]
		\REQUIRE A graphical model $\mathcal{L}$ with its interaction graph $G=(V, E)$ and a target variable set $A$.
		\ENSURE The minimal collapsible set $B$ containing the subset $A$.
		\STATE Identify all connected components $M_1,M_2,\ldots, M_K$ of 
		$G_{V\setminus A}$;
		\FOR{ each connected component $G_{M_i}$ of $G_{V\setminus A}$}
		\STATE Initialize: $B_i:=A$, $G_i = G_{B_i\cup M_i}$;
		\REPEAT	
		\STATE If there exists a connected component $C$ of $G_{M_i}$ whose neighbour in $G_i$ contains a pair of non-adjacent vertices, select any such pair $\{u, v\} \subseteq N_{G_i}(C) \subseteq B_i$;
		\STATE Find $S^u_{uv}$ and $S^v_{uv}$ in $G_i$ using \textbf{CloseSeparator};
		\STATE Update $B_i := B_i \cup S^u_{uv} \cup S^v_{uv}$ and $M_i := M_i \setminus (S^u_{uv} \cup S^v_{uv})$;
		\UNTIL{In $G_i$, the neighbour of every connected component of  $G_{M_i}$ is complete.}
		\ENDFOR	
		\RETURN $\bigcup_{i} B_i$, where $B_i$ is the set obtained in the $i$-th connected component $M_i$.
	\end{algorithmic}
\end{algorithm}
\vspace{-5pt}

Algorithm~\ref{alg-1} searches and absorbs separators localized within several connected components with decreasing size. The following example illustrates its execution. Another example comparing CMSA and SAHR is provided in the Supplementary Material.

\begin{example}	
	Let $G = (V, E)$ be the graph shown in Fig.~\ref{Fig1}(b), and let the target variable set be $A = \{e, s\}$. At the start of the CMSA algorithm, the subgraph $G_{V \setminus A}$ consists of three connected components: $M_1 = \{x\}$, $M_2 = \{b, d\}$, and $M_3 = \{a, l, t\}$, with $B_i$ initialized as $A$ for $i = 1, 2, 3$. The algorithm then processes each connected component as follows:
	\vspace{-5pt}
	\begin{quote}
		\begin{enumerate}
			\item  For $M_1 = \{x\}$, the boundary $N_{G_{\{e,s,x\}}}(\{x\}) = \{e\}$. There are no non-adjacent vertices in $N_G(\{x\})$, so no update of $B_1$ is needed.
			\item For $M_2 = \{b, d\}$ with $G_2 = B_2 \cup M_2 = \{b, d, e, s\}$, the boundary $N_{G_2}(M_2) = \{e, s\}$ contains the non-adjacent pair $\{e, s\}$. In $G_2$, the close minimal $es$-separators are $\{b\}$ for both vertices; thus, we update $B_2 = \{b,e,s\}$ and $M_2 = \{d\}$. Now, $N_{G_2}(M_2) = \{b, e\}$, whose vertices are adjacent, so processing for this component terminates.
			
			
			\item For $M_3 = \{a, l, t\}$, the boundary also contains the non-adjacent pair $\{e, s\}$. Both close minimal $es$-separators are $\{l\}$; hence, we update $B_3 = \{e,l,s\}$ and $M_3 = \{a,t\}$. No further absorption is required for this component.
		\end{enumerate}    
	\end{quote}
	\vspace{-5pt}
	Finally, the CMSA algorithm merges the three updated sets $B_i$, and the union $\bigcup B_i = \{b,e,l,s\}$ forms the minimal collapsible set containing $\{e, s\}$. The induced subgraph is shown in Fig.~\ref{Fig1}(c).
	
\end{example}


The next two theorems establish the correctness and computational complexity of Algorithm~\ref{alg-1}.

\begin{theorem}\label{thm-1}
The subset $B$ obtained by CMSA is the minimal collapsible set containing $A$.
\end{theorem}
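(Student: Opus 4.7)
The plan is to split the statement into two halves: (i) $L$ is collapsible onto $B$, and (ii) $B \subseteq B^*$ for every collapsible $B^*$ with $A \subseteq B^*$. Combining these with the uniqueness of the minimal collapsible set established in \citet{wgh2011} identifies $B$ as the minimal collapsible set containing $A$.

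For (i), I would invoke the termination condition directly. When the REPEAT loop for index $i$ halts, every connected component $C$ of the terminal $G_{M_i}$ has a complete neighbourhood in $G_i = G_{B_i \cup M_i}$. Because the original $M_i$ is a connected component of $G_{V\setminus A}$, every $G$-neighbour of a vertex in $C$ lies in $A \cup M_i^{(0)} = V(G_i)$, so $N_G(C) = N_{G_i}(C)$ is complete in $G$. Collecting these components across $i$ exhausts the components of $G_{V\setminus B}$, and Theorem~2.3 of \citet{asmussen1983} then yields that $L$ is collapsible onto $B$.

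For (ii), I would first establish a decomposition lemma: a set $B^* \supseteq A$ is collapsible in $G$ if and only if, for each $i$, the local restriction $B_i^* := A \cup (B^* \cap M_i^{(0)})$ is a collapsible subset of the induced subgraph $G_i = G_{A \cup M_i^{(0)}}$. This rests on the same observation that each component of $G_{V\setminus B^*}$ is confined to a single $M_i^{(0)}$ and that its $G$- and $G_i$-neighbourhoods coincide. The decomposition reduces (ii) to showing, for each $i$, that the inner-loop output $B_i$ is contained in every $B_i^*$ that is collapsible in $G_i$ and contains $A$.

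The inner claim is then established by induction on the REPEAT counter $t$ with invariant $B_i^{(t)} \subseteq B_i^*$. The base case is immediate. In the inductive step, the algorithm selects non-adjacent $u, v \in N_{G_i}(C) \subseteq B_i^{(t)} \subseteq B_i^*$ for some component $C$ of the current $G_{M_i^{(t)}}$; since $S_{G_i}(u, v)$ and $S_{G_i}(v, u)$ are by definition minimal $uv$-separators in $G_i$, Theorem~\ref{thm-0} applied inside $G_i$ forces $B_i^*$ to contain both, preserving the invariant. Taking unions across $i$ then yields $B \subseteq B^*$. The main obstacle I anticipate is the decomposition lemma itself: separators minimal in $G_i$ need not be minimal in $G$ (paths through some $M_j^{(0)}$ with $j \neq i$ can short-circuit them), so a direct application of Theorem~\ref{thm-0} to $G$ would not suffice; working locally within each $G_i$ is precisely what the decomposition legitimises.
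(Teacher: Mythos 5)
Your proposal is correct and follows the same two-part skeleton as the paper: show that $L$ is collapsible onto $B$ via the termination condition and Theorem~2.3 of \citet{asmussen1983}, and show that $B$ is contained in any collapsible superset of $A$. Part (i) is the paper's argument in direct rather than contrapositive form. The genuine difference is in part (ii): the paper disposes of the containment $B \subseteq B'$ in one sentence, asserting that it ``follows from the iterative construction trivially'' because $B'$ contains all minimal separators between non-adjacent vertices of $A$. As you correctly observe, this glosses over a real subtlety --- the algorithm absorbs separators that are minimal (and close) in the induced subgraph $G_i$, and such a set need not be a minimal separator in $G$, nor even a separator in $G$ at all, since paths through other components $M_j$ can bypass it; moreover the non-adjacent pairs are drawn from the growing sets $B_i^{(t)}$, not just from $A$. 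Your decomposition lemma (each component of $G_{V\setminus B^*}$ lives inside a single $M_i$ with coinciding $G$- and $G_i$-neighbourhoods, so collapsibility of $B^*$ in $G$ localizes to collapsibility of $B_i^*$ in $G_i$) together with the induction on the loop counter supplies exactly the missing justification, since Theorem~\ref{thm-0} can then be applied inside $G_i$ where the absorbed separators genuinely are minimal. What your route buys is a rigorous proof of the half the paper leaves implicit; what it costs is the extra lemma and bookkeeping, which the paper's streamlined presentation avoids at the price of a gap.
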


\begin{proof}
Let $B_0$ denote the unique minimal collapsible set containing $A$, whose existence and uniqueness are guaranteed by Theorem~2.4 of \cite{wgh2011}. We aim to show that the subset $B$ obtained via CMSA satisfies $B = B_0$. Since the CMSA algorithm absorbs only close minimal separators, and by  {Lemma~\ref{prop-3}}, the set $B_0$ already contains all minimal separators between non-adjacent vertices in \(A\), it follows from the iterative construction that \(B \subseteq B_0\) holds trivially. Therefore, it suffices to prove that the graphical model \(\mathcal{L}\) is collapsible onto \(B\). 

Toward a contradiction, assume that $\mathcal{L}$ is not collapsible onto $B$. Then there exists a connected component $M$ of $G_{V \backslash B}$ and non-adjacent vertices $x, y \in N_G(M) \subseteq B$. Let $M_i$ denote the connected component of $G_{V\backslash A}$ containing $M$, and let $B_i$ be the subset returned from processing $M_i$. By construction, $M$ forms a connected component of $G_{A \cup M_i \backslash B_i}$, and its neighbourhood contains two non-adjacent vertices $x$ and $y$. This contradicts the termination condition of the CMSA algorithm. Hence, $\mathcal{L}$ is collapsible onto $B$, and we conclude that $B = B_0$, completing the proof.
\end{proof}

\begin{theorem}\label{thm-2}
The CMSA algorithm has a time complexity of $O(nm)$ and a space complexity of at most $O(n)$, where $n$ and $m$ denote the number of vertices and edges in the graph, respectively.
\end{theorem}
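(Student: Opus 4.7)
The plan is to split the runtime into three buckets: (i) the initial decomposition of $G_{V\setminus A}$ into the connected components $M_1,\dots,M_K$; (ii) the cost of a single pass through the inner \textbf{repeat} loop; and (iii) the total number of such passes over the entire execution of the algorithm.

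For (i), a single depth-first search on the subgraph $G_{V\setminus A}$ produces all $M_i$ in $O(n+m)$ time. For (ii), each iteration invokes \textbf{CloseSeparator} twice, which costs $O(m)$ by \citet{Takata2010}; locating a non-adjacent pair in some $N_{G_i}(C)$ and updating the bitmaps representing $B_i$ and $M_i$ can be carried out in $O(n+m)$ by recomputing the components of $G_{M_i}$ and scanning their boundary vertices for a missing edge. Hence each iteration costs $O(m)$.

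For (iii), I establish the monovariant that every iteration transfers at least one vertex of $M_i$ into $B_i$. The key geometric fact is that if the chosen non-adjacent pair $u,v$ lies in $N_{G_i}(C)$ with $C$ a connected component of $G_{M_i}$, then there is a $uv$-path in $G_i$ whose internal vertices all lie in $C$. Any $uv$-separator must cut this path, and in particular the close separator $S_{G_i}(u,v)\subseteq N_{G_i}(u)$ must contain a neighbor of $u$ on the path, namely the last vertex of $C$ that still lies in $N_{G_i}(u)$ before the path enters the $v$-component of $G_i\setminus N_{G_i}(u)$. Consequently $S_{G_i}(u,v)\cap C\neq\emptyset$, so $|M_i|$ strictly decreases. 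Summing across all outer-loop passes yields at most $\sum_i |M_i|\le|V\setminus A|\le n$ iterations, and multiplying (ii) and (iii) gives the $O(nm)$ time bound. For space, the only auxiliary structures carried between iterations are the bitmaps for $B_i$ and $M_i$ and the BFS scratch inside \textbf{CloseSeparator} (visited flags, a queue, and a parent array), all of size $O(n)$.

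The main obstacle is precisely the monovariant in the third paragraph: without it one could worry that an iteration might select a pair whose close separators are already fully contained in $B_i$, in which case $B_i$ would not grow and the loop could stall. The path-through-$C$ argument, leveraging the connectivity of $C$ together with the fact that $u,v\in N_{G_i}(C)$, is what forecloses this possibility; once it is in hand, the counting argument and the appeal to the $O(m)$ bound for \textbf{CloseSeparator} deliver the result without further subtlety.
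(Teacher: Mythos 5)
Your proof is correct and follows the same route as the paper's: each call to \textbf{CloseSeparator} costs $O(m)$ time and $O(n)$ auxiliary space by \citet{Takata2010}, and the total number of absorption steps is bounded by $n$, giving $O(nm)$ overall. The only difference is one of rigor: the paper simply asserts that at most $n-2$ absorptions are required, whereas you justify this with the monovariant that the chosen close separator must meet the component $C$ (because it has to cut the $uv$-path whose internal vertices lie in $C$), so $|M_i|$ strictly decreases at every iteration --- this fills in the one step the paper leaves implicit.
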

\begin{proof}
When the graph is represented in adjacency-list form, each close minimal separator can be identified in $O(m)$ time and $O(n)$ space complexity \citep{Takata2010}, and at most $n-2$ absorptions are required. The overall complexity is $O(nm)$ in time and $O(n)$ in space.
\end{proof}
\begin{remark}
Existing algorithms \citep{wgh2011,heng2023} for general graphs have $O(nm)$ time complexity and $O(n+m)$ space complexity.

\end{remark}

\vspace{-11pt}
\section{Experimental studies} \label{sec4}


\subsection{Efficiency of identifying minimal collapsible sets in general graphical models \label{sec-4.1}}

We evaluated the proposed algorithm through experiments. 
All algorithms were implemented in C and interfaced through Python. The complete source code is publicly accessible at \href{https://github.com/Balance-H/Algorithms}{https://github.com/Balance-H/Algorithms}. All experiments were conducted on a system equipped with an Intel Xeon Silver 4215R CPU and 128 GB of RAM.

We first evaluate the performance of the CMSA algorithm on general graphical models. Since SAHR is limited to decomposable graphs, we adopt the Induced Path Absorption (IPA) algorithm \citep{heng2023} as a baseline for comparison, as it has been shown to be the most effective method in \cite{heng2023}. For each combination of graph size $n \in \{2500, 5000, 7500, 10000\}$ and edge probability $p \in \{0.1, 0.01, 0.005, 0.001\}$, we generate 100 random trees with $n$ vertices and independently add edges with probability $p$ to obtain the corresponding random graphs. In each graph, we randomly select 10 target vertices and apply the corresponding algorithm to identify the minimal collapsible sets containing these vertices. We record the runtimes and compute the average over the 100 graphs for each $(n,p)$ configuration.

\begin{figure}[h]
	\centering
	\setlength{\belowcaptionskip}{-10pt} 
	\includegraphics[width=0.95\textwidth]{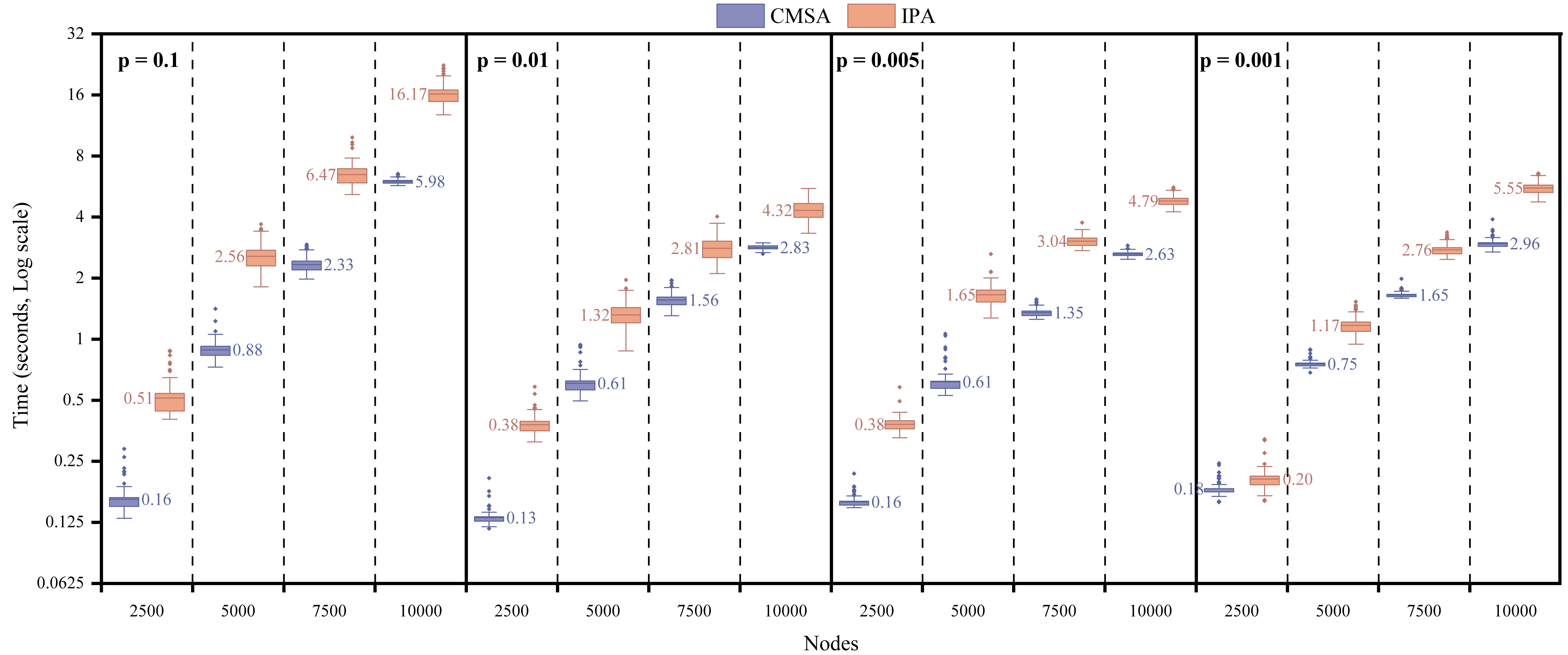} 
	\caption{Average runtimes (log-scaled, in seconds) of CMSA and IPA on networks of varying sizes.}
	\label{fig:example1}
\end{figure}

The experimental results are presented in Fig.~\ref{fig:example1}. The horizontal axis represents the number of nodes $n$ (separated by dashed lines), and the vertical axis shows the runtime in seconds on a logarithmic scale. The four subplots correspond to different edge densities $p$ (separated by solid lines). The blue boxplots represent the CMSA algorithm, and the orange ones represent the IPA algorithm, with the average runtime annotated beside each box. The results show that CMSA is substantially more efficient than IPA on dense graphs, with the performance advantage increasing as the number of nodes grows. As the graphs become sparser, the runtime of both algorithms decreases significantly, while CMSA consistently maintains superior efficiency.

\subsection{Efficiency of identifying minimal collapsible sets in decomposable graphical models \label{sec-4.2}}

We next evaluate the efficiency of CMSA and SAHR (\cite{madigan1990}) in identifying minimal collapsible sets within decomposable graphical models. For each combination of graph size $n \in \{250, 500, 750, 1000\}$ and edge probability $p \in \{0.1, 0.01\}$, we generate 100 random chordal graphs with approximately $n - 1 + 0.5 \, n(n - 1)p$ edges using a growing-subtree construction \citep{cseker2022generation}, where sparsity is controlled by the maximum subtree size. In each graph, we randomly select 10 target vertices and apply both CMSA and SAHR to identify the minimal collapsible sets containing these vertices. We record the runtimes and compute the mean runtime over the 100 graphs for each $(n,p)$ configuration.

\begin{table}[h]
	\caption{Average running times (s) of CMSA and SAHR in decomposable graphical models}
	\centering
	\resizebox{0.85\columnwidth}{!}{\begin{tabular}{lcccccccc}
			Nodes & \multicolumn{2}{c}{250} & \multicolumn{2}{c}{500}  & \multicolumn{2}{c}{750} & \multicolumn{2}{c}{1000} \\
			Average edges & 529 & 3334 & 1812 & 12912 & 3567 & 28652 & 6062 & 52959 \\
			CMSA & 0.0007 & 0.0012 & 0.0021 & 0.0047 & 0.0044 & 0.0112 & 0.0072 & 0.0248 \\
			SAHR & 0.0113 & 0.0611 & 0.0681 & 0.5455 & 0.1876 & 2.1648 & 0.3808 & 6.6983 \\
		\end{tabular}
	}
	\label{tab-1}
\end{table}

The experimental results in Table~\ref{tab-1} indicate that CMSA consistently outperforms SAHR across all graph sizes and densities. The advantage of CMSA becomes increasingly pronounced as the number of nodes and edges grows. This efficiency gain is primarily due to CMSA’s localized separator absorption strategy, which avoids the repeated boundary checks and global updates that dominate SAHR’s runtime in high-dimensional graphs.

\vspace{-11pt}

\section{Discussion}\label{sec5}

This note revisits a line of research tracing back to seminal works: \cite{asmussen1983} first formalized collapsibility in log-linear models, and \cite{madigan1990} introduced an elegant algorithm for finding minimal collapsible sets of decomposable graphs. We extend this line of work by showing that collapsibility admits a simple separator-based characterization: a model is collapsible onto a target set if and only if the set contains  {at least one minimal separator} between its non-adjacent vertices.
On this basis, we developed the Close Minimal Separator Absorption (CMSA) algorithm, which consistently outperforms existing methods and scales to high-dimensional graphs. Our findings highlight both theoretical and practical implications. Theoretically, they deepen the connection between collapsibility and local separator structure. Practically, they provide a tool that makes collapsibility analysis tractable in modern large-scale applications. We anticipate that the separator-based view may prove useful not only for contingency tables but also for broader classes of graphical models.

\vspace{-11pt}
\section*{Acknowledgments}
The authors are grateful to the Editor, Associate Editor, and anonymous reviewers for their constructive advice to improve the manuscript. This work was partially supported. Jianhua Guo and Pei Heng received support from the National Natural Science Foundation of China (Key Program No. 12431009) and the National Key Research and Development Program of China (Nos. 2020YFA0714100, 2020YFA0714102); Shiyuan He received support from the National Natural Science Foundation of China (No. 12571278); and Yi Sun received support from the National Natural Science Foundation of China Mathematical Tianyuan Fund (Nos. 12426105, 12426520). The first two authors contribute equally to this work.

\vspace{-11pt}
\section*{Supplementary Material}
The Supplementary Material includes an additional example illustrating the iterative procedures of SAHR and CMSA.

\bibliographystyle{biometrika}
\bibliography{ref} 

\newpage

\section{Supplementary Example}

In this supplementary section, we discuss in more depth about our proposed CMSA algorithm and the SAHR algorithm \citep{madigan1990}. Using the chordal graph shown in Fig.~\ref{Fig1}(a), the following two examples illustrate how the two algorithms identify the unique minimal collapsible set $\{b,e,l,s\}$ containing $A=\{e,s\}$. 

In Example~\ref{eg:supp:1}, we demonstrate that SAHR repeatedly examines all vertices in $V\setminus A$  to remove simplicial vertices. In general, the number of required iterations grows proportionally with the vertex size $|V\setminus A|$, and the computational cost of each iteration is highly sensitive to the ordering of vertices in $V\setminus A$, making SAHR difficult to scale to high-dimensional graphs.

\begin{figure}[h]
	\centering
	\begin{subfigure}[b]{0.19\textwidth}
		\centering
		\begin{tikzpicture}[
			scale=0.7,
			transform shape,
			every node/.style={circle, draw=black, minimum size=7mm, inner sep=0pt, font=\large}
			]
			\node (1) at (0,0){$a$};
			\node (2) at (0,-1){$t$};
			\node (3) at (1.8,-1){$l$};
			\node [fill=gray!30] (4) at (0.8,-2){$e$};
			\node (5) at (0,-3){$x$};
			\node (6) at (2.4,-3){$d$};
			\node (7) at (3,-1.3){$b$};
			\node [fill=gray!30] (8) at (2.4,0){$s$};
			\draw[-,thick] (2)--(1); \draw[-,thick] (2)--(3); \draw[-,thick] (4)--(3);
			\draw[-,thick] (2)--(4); \draw[-,thick] (4)--(5); \draw[-,thick] (4)--(6);
			\draw[-,thick] (4)--(7); \draw[-,thick] (8)--(3); \draw[-,thick] (8)--(7);
			\draw[-,thick] (6)--(7); \draw[-,thick] (3)--(7);
		\end{tikzpicture}
		\caption{} 
	\end{subfigure}
	\hfill 
	\begin{subfigure}[b]{0.19\textwidth}
		\centering
		\begin{tikzpicture}[
			scale=0.7,
			transform shape,
			every node/.style={circle, draw=black, minimum size=7mm, inner sep=0pt, font=\large}
			]
			\node (2) at (0,-1){$t$};
			\node (3) at (1.8,-1){$l$};
			\node [fill=gray!30] (4) at (0.8,-2){$e$};
			\node (5) at (0,-3){$x$};
			\node (6) at (2.4,-3){$d$};
			\node (7) at (3,-1.3){$b$};
			\node [fill=gray!30] (8) at (2.4,0){$s$};
			\draw[-,thick] (2)--(3); \draw[-,thick] (4)--(3);
			\draw[-,thick] (2)--(4); \draw[-,thick] (4)--(5); \draw[-,thick] (4)--(6);
			\draw[-,thick] (4)--(7); \draw[-,thick] (8)--(3); \draw[-,thick] (8)--(7);
			\draw[-,thick] (6)--(7); \draw[-,thick] (3)--(7);
		\end{tikzpicture}
		\caption{} 
	\end{subfigure}
	\hfill
	\begin{subfigure}[b]{0.19\textwidth}
		\centering
		\begin{tikzpicture}[
			scale=0.7,
			transform shape,
			every node/.style={circle, draw=black, minimum size=7mm, inner sep=0pt, font=\large}
			]
			\node (3) at (1.8,-1){$l$};
			\node [fill=gray!30] (4) at (0.8,-2){$e$};
			\node (5) at (0,-3){$x$};
			\node (6) at (2.4,-3){$d$};
			\node (7) at (3,-1.3){$b$};
			\node [fill=gray!30] (8) at (2.4,0){$s$};
			\draw[-,thick] (4)--(3);
			\draw[-,thick] (4)--(5); \draw[-,thick] (4)--(6);
			\draw[-,thick] (4)--(7); \draw[-,thick] (8)--(3); \draw[-,thick] (8)--(7);
			\draw[-,thick] (6)--(7);\draw[-,thick] (3)--(7);
		\end{tikzpicture}
		\caption{} 
	\end{subfigure}
	\hfill 
	\begin{subfigure}[b]{0.19\textwidth}
		\centering
		\begin{tikzpicture}[
			scale=0.7,
			transform shape,
			every node/.style={circle, draw=black, minimum size=7mm, inner sep=0pt, font=\large}
			]
			\node (3) at (1.8,-1){$l$};
			\node [fill=gray!30] (4) at (0.8,-2){$e$};
			\node (6) at (2.4,-3){$d$};
			\node (7) at (3,-1.3){$b$};
			\node [fill=gray!30] (8) at (2.4,0){$s$};
			\draw[-,thick] (4)--(3);
			\draw[-,thick] (4)--(6);
			\draw[-,thick] (4)--(7); \draw[-,thick] (8)--(3); \draw[-,thick] (8)--(7);
			\draw[-,thick] (6)--(7);\draw[-,thick] (3)--(7);
		\end{tikzpicture}
		\caption{} 
	\end{subfigure}
	\hfill 
	\begin{subfigure}[b]{0.19\textwidth}
		\centering
		\begin{tikzpicture}[
			scale=0.7,
			transform shape,
			every node/.style={circle, draw=black, minimum size=7mm, inner sep=0pt, font=\large}
			]
			\node (3) at (1.8,-1){$l$};
			\node [fill=gray!30] (4) at (0.8,-2){$e$};
			\node (7) at (3,-1.3){$b$};
			\node [fill=gray!30] (8) at (2.4,0){$s$};
			\draw[-,thick] (4)--(3);\draw[-,thick] (3)--(7);
			\draw[-,thick] (4)--(7); \draw[-,thick] (8)--(3); \draw[-,thick] (8)--(7);
		\end{tikzpicture}
		\caption{} 
	\end{subfigure}
	\caption{\textbf{(a)}. The chordal graph obtained from the moral graph of the Asia network \citep{lauritzen1988local} by adding the edge $(b,l)$; \textbf{(b)}. The graph obtained from (a) by removing the simplicial vertex $a$; \textbf{(c)}. The graph obtained from (b) by removing the simplicial vertex $t$; \textbf{(d)}. The graph obtained from (c) by removing the simplicial vertex $x$; \textbf{(e)}. The graph obtained from (d) by removing the simplicial vertex $d$.}
	\label{Fig1}
\end{figure}

\begin{example} \label{eg:supp:1}
	We first present in Fig.~\ref{Fig1} a complete illustration of how the SAHR algorithm \citep{madigan1990} iteratively removes simplicial vertices to identify the minimal collapsible set containing the target variable set \( \{e, s\} \).
	Since the SAHR algorithm does not specify a fixed elimination order, at each iteration it traverses all vertices in \( V \setminus \{e, s\} \) until a simplicial vertex is identified and removed, and then repeats the same procedure.
	As a result, its efficiency depends heavily on the ordering of vertices in \( V \setminus \{e, s\} \).
	We illustrate this point through the two cases below. \\
	
	\noindent \textbf{Case 1.} Suppose the vertices in \( V \setminus \{e, s\} \) are ordered as
	\( \{a, t, x, d, l, b\} \),
	and the algorithm examines vertices according to this order at each iteration.
	\begin{itemize}  \setlength{\itemsep}{0pt}
		\item[Step 1.]
		The algorithm checks vertex \( a \) and identifies it as simplicial. Vertex \( a \) is removed.
		\item[Step 2.]
		For the remaining vertices \( \{t, x, d, l, b\} \) in Fig.~\ref{Fig1}(b), the algorithm checks vertex \( t \), finds it simplicial, and removes it.
		\item[Step 3.]
		For the remaining vertices \( \{x, d, l, b\} \) in Fig.~\ref{Fig1}(c), the algorithm checks vertex \( x \), finds it simplicial, and removes it.
		\item[Step 4.]
		For the remaining vertices \( \{d, l, b\} \) in Fig.~\ref{Fig1}(d), the algorithm checks vertex \( d \), finds it simplicial, and removes it.
		\item[Step 5.]
		For the remaining vertices \( \{l, b\} \) in Fig.~\ref{Fig1}(e), the algorithm finds that neither \( l \) nor \( b \) is simplicial and terminates.
	\end{itemize}
	
	\vspace{10pt}
	
	\noindent\textbf{Case 2.} Suppose instead that the vertices in \( V \setminus \{e, s\} \) are ordered as
	\( \{l, b, t, a, x, d\} \),
	and the algorithm again examines vertices according to this order at each iteration.
	\begin{itemize} \setlength{\itemsep}{0pt}
		\item[Step 1.]
		The algorithm checks vertices \( l \), \( b \), and \( t \) and finds none of them simplicial.
		It then checks vertex \( a \), identifies it as simplicial, and removes it.
		\item[Step 2.]
		For the remaining vertices \( \{l, b, t, x, d\} \) in Fig.~\ref{Fig1}(b), the algorithm checks \( l \) and \( b \) and finds them not simplicial.
		It then checks vertex \( t \), finds it simplicial, and removes it.
		\item[Step 3.]
		For the remaining vertices \( \{l, b, x, d\} \) in Fig.~\ref{Fig1}(c), the algorithm again checks \( l \) and \( b \) without success,
		then checks vertex \( x \), finds it simplicial, and removes it.
		\item[Step 4.]
		For the remaining vertices \( \{l, b, d\} \) in Fig.~\ref{Fig1}(d), the algorithm checks \( l \) and \( b \), finds them not simplicial,
		then checks vertex \( d \), finds it simplicial, and removes it.
		\item[Step 5.]
		For the remaining vertices \( \{l, b\} \) in Fig.~\ref{Fig1}(e), the algorithm finds that neither vertex is simplicial and terminates.
	\end{itemize}
	
	In each iteration, the algorithm must repeatedly examine the remaining vertices to identify a simplicial one.
	This is because the simpliciality of a vertex may change as other vertices are removed, as illustrated by vertex \( t \) in Case~2.
	In general, especially for large graphs, an optimal vertex ordering (as in Case~1) cannot be known in advance.
	Such redundant checks, as seen in Case~2, can substantially degrade performance, particularly in high-dimensional graphs.
\end{example}

In contrast, the proposed CMSA algorithm exploits a local structure searching strategy, which significantly reduces the computational cost for general high-dimensional graphs.  Specifically, “locality” refers to the way the proposed algorithm searches for close minimal separators within node neighbourhoods. During the iterations, the search is restricted to several connected components and their associated neighbourhoods, treated separately. More importantly, the size of each connected component decreases progressively, leading to a corresponding reduction in the search effort. We illustrate this by Example~\ref{eg:supp:2} below.

\begin{example} \label{eg:supp:2}
	We illustrate in Fig.~\ref{Fig2} the procedure of the proposed CMSA algorithm for iteratively absorbing minimal separators to identify the minimal collapsible set. The steps are as follows.
	
	\begin{figure}[t]
		\centering
		\begin{subfigure}[b]{0.19\textwidth}
			\centering
			\begin{tikzpicture}[
				scale=0.6,
				transform shape,
				every node/.style={circle, draw=black, minimum size=7mm, inner sep=0pt, font=\large}
				]
				\node (1) at (0,0){$a$};
				\node (2) at (0,-1){$t$};
				\node (3) at (1.8,-1){$l$};
				\node [fill=gray!30] (4) at (0.8,-2){$e$};
				\node (5) at (0,-3){$x$};
				\node (6) at (2.4,-3){$d$};
				\node (7) at (3,-1.3){$b$};
				\node [fill=gray!30] (8) at (2.4,0){$s$};
				\draw[-,thick] (2)--(1); \draw[-,thick] (2)--(3); \draw[-,thick] (4)--(3);
				\draw[-,thick] (2)--(4); \draw[-,thick] (4)--(5); \draw[-,thick] (4)--(6);
				\draw[-,thick] (4)--(7); \draw[-,thick] (8)--(3); \draw[-,thick] (8)--(7);
				\draw[-,thick] (6)--(7); \draw[-,thick] (3)--(7);
			\end{tikzpicture}
			\caption{} 
		\end{subfigure}
		\hfill 
		\begin{subfigure}[b]{0.19\textwidth}
			\centering
			\begin{tikzpicture}[
				scale=0.6,
				transform shape,
				every node/.style={circle, draw=black, minimum size=7mm, inner sep=0pt, font=\large}
				]
				\node (1) at (0,0){$a$};
				\node (2) at (0,-1){$t$};
				\node (3) at (1.8,-1){$l$};
				\node (5) at (0,-3){$x$};
				\node (6) at (2.4,-3){$d$};
				\node (7) at (3,-1.3){$b$};
				\draw[-,thick] (2)--(3); \draw[-,thick] (2)--(1);
				\draw[-,thick] (6)--(7); \draw[-,thick] (3)--(7);
			\end{tikzpicture}
			\caption{} 
		\end{subfigure}
		\hfill
		\begin{subfigure}[b]{0.19\textwidth}
			\centering
			\begin{tikzpicture}[
				scale=0.6,
				transform shape,
				every node/.style={circle, draw=black, minimum size=7mm, inner sep=0pt, font=\large}
				]
				\node (1) at (0,0){$a$};
				\node (2) at (0,-1){$t$};
				\node (3) at (1.8,-1){$l$};
				\node [fill=gray!30] (4) at (0.8,-2){$e$};
				\node (6) at (2.4,-3){$d$};
				\node (7) at (3,-1.3){$b$};
				\node [fill=gray!30] (8) at (2.4,0){$s$};
				\draw[-,thick] (2)--(1); \draw[-,thick] (2)--(3); \draw[-,thick] (4)--(3);
				\draw[-,thick] (2)--(4); \draw[-,thick] (4)--(6);
				\draw[-,thick] (4)--(7); \draw[-,thick] (8)--(3); \draw[-,thick] (8)--(7);
				\draw[-,thick] (6)--(7); \draw[-,thick] (3)--(7);
			\end{tikzpicture}
			\caption{} 
		\end{subfigure}
		\hfill 
		\begin{subfigure}[b]{0.19\textwidth}
			\centering
			\begin{tikzpicture}[
				scale=0.6,
				transform shape,
				every node/.style={circle, draw=black, minimum size=7mm, inner sep=0pt, font=\large}
				]
				\node (1) at (0,0){$a$};
				\node (2) at (0,-1){$t$};
				\node [fill=gray!30] (3) at (1.8,-1){$l$};
				\node [fill=gray!30] (4) at (0.8,-2){$e$};
				\node (6) at (2.4,-3){$d$};
				\node [fill=gray!30] (7) at (3,-1.3){$b$};
				\node [fill=gray!30] (8) at (2.4,0){$s$};
				\draw[-,thick] (2)--(1); \draw[-,thick] (2)--(3); \draw[-,thick] (4)--(3);
				\draw[-,thick] (2)--(4); \draw[-,thick] (4)--(6);
				\draw[-,thick] (4)--(7); \draw[-,thick] (8)--(3); \draw[-,thick] (8)--(7);
				\draw[-,thick] (6)--(7); \draw[-,thick] (3)--(7);
			\end{tikzpicture}
			\caption{} 
		\end{subfigure}
		\hfill
		\begin{subfigure}[b]{0.19\textwidth}
			\centering
			\begin{tikzpicture}[
				scale=0.6,
				transform shape,
				every node/.style={circle, draw=black, minimum size=7mm, inner sep=0pt, font=\large}
				]
				\node (1) at (0,0){$a$};
				\node (2) at (0,-1){$t$};
				\node (6) at (2.4,-3){$d$};
				
				\draw[-,thick] (1)--(2);
			\end{tikzpicture}
			\caption{} 
		\end{subfigure}
		
		\caption{\textbf{(a)}. The chordal graph obtained from the moral graph of the Asia network \citep{lauritzen1988local} by adding the edge $(b,l)$; 
			\textbf{(b)}. The graph obtained from (a) by removing the variables of interest $A = \{e, s\}$; 
			\textbf{(c)}. The graph obtained by combining the connected component $\{a, t, l, b, d\}$ with the target variables $A = \{e, s\}$;
			\textbf{(d)}. The graph obtained after one absorption operation, with the variables of interest updated to $A = \{b, e, l, s\}$; 
			\textbf{(e)}. The graph obtained from (d) by removing the variables of interest $A = \{b, e, l, s\}$.}
		
		\label{Fig2}
	\end{figure}
	
	\begin{enumerate} \setlength{\itemsep}{0pt}
		\item[Step 1.]
		The algorithm first identifies the connected components of \( V \setminus \{e, s\} \).
		As shown in Fig.~\ref{Fig2}(b), there are two components:
		\[
		M_1 = \{a, t, l, b, d\}, 
		\qquad
		M_2 = \{x\}.
		\]
		The component \( M_2 \) has a neighbourhood consisting of a single vertex and therefore requires no further operation.
		
		\item[Step 2.]
		For the component \( M_1 \), we combine it with the target variable set \( \{e, s\} \) to obtain the graph shown in Fig.~\ref{Fig2}(c).
		We then apply the \textsc{CloseSeparator} algorithm \citep{Takata2010} to search separately for minimal \( e\!s \)-separators close to \( e \) and close to \( s \) in Fig.~\ref{Fig2}(c).
		In this case, both searches yield the same set \( \{l, b\} \).
		Absorbing \( \{l, b\} \) into the target set \( \{e, s\} \) yields the updated graph structure shown in Fig.~\ref{Fig2}(d).
		
		\item[Step 3.]
		In the second iteration, CMSA identifies the connected components of
		\( V \setminus \{b, e, l, s\} \).
		As shown in Fig.~\ref{Fig2}(e), there are two components:
		\[
		M_3 = \{a, t\}, 
		\qquad
		M_4 = \{d\}.
		\]
		Both components \( M_3 \) and \( M_4 \) in Fig.~\ref{Fig2}(d) have complete neighbourhood sets.
		Therefore, the algorithm terminates and outputs the minimal collapsible set
		\( \{b, e, l, s\} \).
	\end{enumerate}
\end{example}

In summary, compared with CMSA,  SAHR generally requires more iterations; and each iteration can also be more complex, depending on the graph and the vertex order for simpliciality checks. On the other hand, CMSA directly identifies and absorbs close minimal separators within local neighbourhoods. CMSA typically converges in far fewer iterations. As a result, to find a minimal collapsible set containing a target set $A$, our computational savings arise from the reduced number of iterations and the avoidance of repeated scan within the vertices of $V\setminus A$.
\end{document}